\newcommand{\mbZ}{\mathbb Z}
\newcommand{\oh}{\overline h}
\newcommand{\hLambda}{\widehat\Lambda}
\def\mbQ{{\mathbb Q}}
\def\d{{\partial}}
\newcommand{\eps}{\varepsilon}
\newcommand{\hcA}{\widehat{\mathcal A}}
\DeclareMathOperator{\res}{res}
\newcommand{\mcL}{\mathcal{L}}
\newcommand{\hL}{\widehat{L}}
\newcommand{\ILW}{\mathrm{ILW}}
\newcommand{\hA}{\widehat{A}}
\newcommand{\hp}{\widehat{p}}
\newcommand{\Lax}{\mathrm{Lax}}
\numberwithin{equation}{section}
\begin{document}

\allowdisplaybreaks

\newcommand{\arXivNumber}{1809.00271}

\renewcommand{\PaperNumber}{120}

\FirstPageHeading

\ShortArticleName{Simple Lax Description of the ILW Hierarchy}

\ArticleName{Simple Lax Description of the ILW Hierarchy}

\Author{Alexandr BURYAK~$^{\dag\ddag}$ and Paolo ROSSI~$^\S$}

\AuthorNameForHeading{A.~Buryak and P.~Rossi}

\Address{$^\dag$~School of Mathematics, University of Leeds, Leeds, LS2 9JT, UK}
\EmailD{\href{mailto:a.buryak@leeds.ac.uk}{a.buryak@leeds.ac.uk}}
\URLaddressD{\url{http://sites.google.com/site/alexandrburyakhomepage/home/}}

\Address{$^\ddag$~Faculty of Mechanics and Mathematics, Lomonosov Moscow State University,\\
\hphantom{$^\ddag$}~Moscow, GSP-1, 119991, Russia}

\Address{$^\S$~Dipartimento di Matematica ``Tullio Levi-Civita'', Universit\`a degli Studi di Padova,\\
\hphantom{$^\S$}~Via Trieste 63, 35121 Padova, Italy}
\EmailD{\href{mailto:paolo.rossi@math.unipd.it}{paolo.rossi@math.unipd.it}}
\URLaddressD{\url{http://www.math.unipd.it/~rossip/}}

\ArticleDates{Received September 07, 2018, in final form November 06, 2018; Published online November 10, 2018}

\Abstract{In this note we present a simple Lax description of the hierarchy of the intermediate long wave equation (ILW hierarchy). Although the linear inverse scattering problem for the ILW equation itself was well known, here we give an explicit expression for all higher flows and their Hamiltonian structure in terms of a single Lax difference-differential operator.}

\Keywords{intermediate long wave hierarchy; ILW; Lax representation; integrable systems; Hamiltonian}

\Classification{37K10}

\section{Introduction}

The intermediate long wave (ILW) equation was introduced in \cite{Jos77} to describe the propagation of waves in a two-layer fluid of finite depth. The model represents the natural interpolation between the Benjamin--Ono deep water and Kortweg--de Vries shallow water theories. The ILW equation for the time $s$ evolution of a function $w=w(x)$ defined on the real line has the following form:
\begin{gather*}
w_s +2 w w_x + T(w_{xx})= 0, \qquad T(f) := \mathrm{p.v.}\!\int_{-\infty}^{+\infty}\! \frac{1}{2\delta}\left( \operatorname{sgn}(x-\xi)-\coth\frac{\pi (x-\xi)}{2\delta} \right) f(\xi) {\rm d}\xi,
\end{gather*}
where $\delta$ is a parameter and $\mathrm{p.v.}\int\cdot \, {\rm d}\xi$ represents the principal value integral. This equation can be rewritten in the formal loop space formalism (see for instance \cite{BDGR16b}, from which we borrow notations for the rest of the paper) as
\begin{gather}\label{eq:ILW equation}
u_t = u u_x + \sum_{g\geq 1} \mu^{g-1} \eps^{2g} \frac{|B_{2g}|}{(2g)!}u_{2g+1},
\end{gather}
where $B_{2g}$ are the Bernoulli numbers. Indeed the action of the operator $T$ on a function $f$ can be written in terms of the derivatives of $f$ as $T(f) = \sum\limits_{n \geq 1} \delta^{2n-1} 2^{2n} \frac{|B_{2n}|}{(2n)!} \d_x^{2n-1}f$, and, by setting $w = \frac{\sqrt{\mu}}{\eps} u$, $s = - \frac{\eps}{2 \sqrt{\mu}} t$ and $\delta = \frac{\eps\sqrt{\mu} }{2}$, we see how the above equations are equivalent.

In \cite{SAK79} the integrability of the ILW equation was established, finding an infinite number of conserved densities and giving a corresponding inverse scattering problem. From this inverse scattering problem a Lax representation can of course be deduced. Also, through a simple Hamiltonian structure, the conserved densities generate an infinite number of commuting flows. However the explicit relation between the Lax representation of the ILW equation, its higher flows and the Hamiltonian structure was never, to the best of our knowledge, clarified in the literature.

In this paper we give an explicit and remarkably simple Lax description of the full hierarchy of commuting flows of the ILW equation (the ILW hierarchy) and their Hamiltonian structure in terms of a single Lax mixed difference-differential operator.

As it turns out, this Lax description corresponds to the equivariant bi-graded Toda hierarchy of \cite[Section~B.2]{MT11} in the somewhat degenerate case of bi-degree $(1,0)$, hence a reduction of the 2D-Toda hierarchy. In fact, in \cite{MT11}, because of the geometric origin of their problem, the authors always work with strictly positive bi-degree, but this is an unnecessary restriction. What we do here is prove that the bi-degree $(1,0)$ case corresponds to the ILW hierarchy in its Hamiltonian formulation. A similar question is raised in \cite[Remark~31]{Car03}, where the author notices that the definition of all but the extended flows of the extended bi-graded Toda hierarchy survives when one of the bi-degrees vanishes. Since in the equivariant case there is no need of extended flows, this problem does not arise in the Lax representation of the equivariant bi-graded Toda hierarchy. Of course this means that equivariant bi-graded Toda hierarchies of any bidegree~$(N,0)$, with $N\geq 1$, are well defined (see also \cite{BCRR14} for their relation with the rational reductions of the 2D-Toda hierarchy). We will study the relation of such hierarchies with the geometry of equivariant orbifold Gromov--Witten theory in an upcoming publication.

\section{ILW hierarchy}

Here and in what follows we will use the formal loop space formalism in the notations of \cite{BDGR16b}. The Hamiltonian structure of the ILW equation~\eqref{eq:ILW equation} is given by the Hamiltonian
\begin{gather*}
\oh_1^\ILW=\int\left(\frac{u^3}{6}+\sum_{g\ge 1}\mu^{g-1}\eps^{2g}\frac{|B_{2g}|}{2(2g)!}u u_{2g}\right){\rm d}x
\end{gather*}
and the Poisson bracket $\{\cdot,\cdot\}_{\d_x}$, associated to the operator $\d_x$. The Hamiltonians $\oh_d^{\ILW}\in\hLambda_u^{[0]}$, $d\ge 2$, of the higher flows of the ILW hierarchy are uniquely determined by the properties
\begin{gather*}
\left.\oh_d^\ILW\right|_{\eps=0}=\int\frac{u^{d+2}}{(d+2)!}{\rm d}x,\qquad\left\{\oh_d^\ILW,\oh_1^\ILW\right\}_{\d_x}=0.
\end{gather*}
For example,
\begin{gather*}
\oh^{\ILW}_2=\int\left(\frac{u^4}{4!}+\frac{\eps^2}{48}u^2 u_{xx}+\sum_{g\ge 2}\frac{|B_{2g}|}{(2g)!}\eps^{2g}\left(\mu^{g-2}\frac{g+1}{2}u u_{2g}+\mu^{g-1}\frac{1}{4}u^2u_{2g}\right)\right){\rm d}x.
\end{gather*}
We refer the reader to the paper~\cite[Section 8]{Bur15}, which explains a relation between the Hamilto\-nians~$\oh^\ILW_d$ and the conserved quantities of the ILW equation, constructed in~\cite{SAK79}. It is convenient to introduce an additional Hamiltonian $\oh_0^\ILW:=\int\frac{u^2}{2}{\rm d}x$, which generates spatial translations. So the flows of the ILW hierarchy are given by
\begin{gather*}
\frac{\d u}{\d t_d}=\d_x\frac{\delta\oh_d^{\ILW}}{\delta u},\qquad d\ge 0,
\end{gather*}
where we identify the times $t_1$ and $t$.

\section{Lax description of the ILW hierarchy}

Our Lax description of the ILW hierarchy is presented in Section~\ref{subsection:Lax description}, see Theorem~\ref{main theorem}. Before that, in Section~\ref{subsection:shift operators}, we recall necessary definitions from the theory of shift operators.

\subsection{Shift operators}\label{subsection:shift operators}

Let $\Lambda:={\rm e}^{{\rm i}\eps\d_x}$. We will consider formal series of the form
\begin{gather*}
A=\sum_{n\le m}a_n\Lambda^n,\qquad a_n\in\hcA_u,\qquad m\in\mbZ.
\end{gather*}
Via the operation of composition $\circ$, the vector space of such formal operators is endowed with the structure of a non-commutative associative algebra. The positive part~$A_+$, the negative part~$A_-$ and the residue~$\res A$ of the operator $A$ are defined by
\begin{gather*}
A_+:=\sum_{n=0}^m a_n\Lambda^n,\qquad A_-:=A-A_+,\qquad \res A:=a_0.
\end{gather*}
Let $z$ be a formal variable. The symbol $\hA$ of the operator $A$ is defined by
\begin{gather*}
\hA:=\sum_{n\le m}a_n {\rm e}^{nz}.
\end{gather*}

For an operator $L$ of the form
\begin{gather*}
L=\Lambda+\sum_{n\ge 0}a_n\Lambda^{-n},\qquad a_n\in\hcA_u,
\end{gather*}
one can define the dressing operator $P$,
\begin{gather*}
P=1+\sum_{n\ge 1}p_n\Lambda^{-n},
\end{gather*}
by the identity
\begin{gather*}
L=P\circ\Lambda\circ P^{-1}.
\end{gather*}
Note that the coefficients $p_n$ of the dressing operator do not belong to the ring $\hcA_u$, but to a~certain extension of it~(see, e.g.,~\cite[Section~2]{CDZ04}). The dressing operator $P$ is defined up to the multiplication from the right by an operator of the form $1+\sum\limits_{n\ge 1}\hp_n\Lambda^{-n}$, where $\hp_n$ are some constants.

The logarithm $\log L$ is defined by
\begin{gather*}
\log L:=P\circ {\rm i}\eps\d_x\circ P^{-1}={\rm i}\eps\d_x-{\rm i}\eps P_x\circ P^{-1},
\end{gather*}
where $P_x=\sum\limits_{n\ge 1}(p_n)_x\Lambda^{-n}$. The ambiguity in the choice of dressing operator is cancelled in the definition of $\log L$ and, moreover, the coefficients of $\log L$ do belong to $\hcA_u$ (see the proof of Theorem~2.1 in~\cite{CDZ04}). To be more precise, one has the commutation relations
\begin{gather}\label{eq:computation of log L}
\big[\log L,L^m\big]=0,\qquad m\ge 1,
\end{gather}
which imply that
\begin{gather*}
\res \big[{\rm i}\eps P_x\circ P^{-1},L^m\big]={\rm i}\eps\d_x\res L^m,\qquad m\ge 1.
\end{gather*}
These relations allow to compute recursively all the coefficients of the operator ${\rm i}\eps P_x\circ P^{-1}$. As a result, if we write
\begin{gather*}
\log L={\rm i}\eps\d_x+\sum_{n\ge 1}f_n\Lambda^{-n},
\end{gather*}
then the coefficient $f_n$ can be expressed as a differential polynomial in the coefficients $a_0,a_1,\ldots$, $a_{n-1}$ of the operator~$L$, $f_n=f_n(a_0,\ldots,a_{n-1})\in\hcA^{[0]}_{a_0,\ldots,a_{n-1}}$. For example,
\begin{gather}\label{eq:formula for f_1}
f_1=\frac{{\rm i}\eps\d_x}{\Lambda-1}a_0=\sum_{n\ge 0}\frac{B_n}{n!}({\rm i}\eps\d_x)^n a_0.
\end{gather}

\subsection{Lax description}\label{subsection:Lax description}

Let $\tau$ be a formal variable and
\begin{gather*}
\mcL:=\Lambda+u-\tau {\rm i}\eps\d_x.
\end{gather*}
From the discussion of the construction of the logarithm $\log L$ in the previous section it is easy to see that there exists a unique operator $L$ of the form
\begin{gather*}
L=\Lambda+\sum_{n\ge 0}a_n\Lambda^{-n},\qquad a_n\in\hcA^{[0]}_u[\tau],
\end{gather*}
satisfying
\begin{gather}\label{eq:definition of L}
L-\tau\log L=\mcL.
\end{gather}
Since $\big[L^{d+1},\mcL\big]=0$, $d\ge 0$, the commutator $\big[\big(L^{d+1}\big)_+,\mcL\big]$ doesn't contain terms with non-zero powers of $\Lambda$. Consider the following system of PDEs:
\begin{gather}\label{eq:Lax equations}
\frac{\d u}{\d T_d}=\frac{\d\mcL}{\d T_d}=\frac{1}{\tau {\rm i}\eps(d+1)!}\big[\big(L^{d+1}\big)_+,\mcL\big],\qquad d\ge 0.
\end{gather}
The following theorem is the main result of our paper.
\begin{theorem}\label{main theorem}\quad
\begin{enumerate}\itemsep=0pt
\item[$1.$] The flows $\frac{\d}{\d T_d}$, given by~\eqref{eq:Lax equations}, pairwise commute.
\item[$2.$] The system of Lax equations~\eqref{eq:Lax equations} possesses a Hamiltonian structure given by the Hamiltonians
\begin{gather*}
\oh^\Lax_d=\int\left(\frac{\res L^{d+2}}{(d+2)!}-\frac{\tau}{d+1}\frac{\res L^{d+1}}{(d+1)!}\right){\rm d}x,\qquad d\ge 0,
\end{gather*}
and the Poisson bracket associated to the operator $\d_x$.
\item[$3.$] Let $y$ be a formal variable and define polynomials $P_d(y)\in\mbQ[y,\tau]$, $d\ge 1$, by
\begin{gather*}
P_d(y):=y\prod_{i=1}^{d-1}\left(y+\frac{\tau}{{\rm i}}\right)=\sum_{j=1}^d P_{d,j}y^j\tau^{d-j},\qquad P_{d,j}\in\mbQ.
\end{gather*}
The ILW hierarchy is related to the hierarchy~\eqref{eq:Lax equations} by the following triangular transformation:
\begin{gather}\label{eq:relation between Hamiltonians}
\oh_d^\Lax=\sum_{j=0}^d P_{d+1,j+1}\tau^{d-j}\left.\oh_j^\ILW\right|_{\substack{\mu=-\tau^{-1}\\\eps\mapsto\eps\sqrt{-\tau}}},\qquad d\ge 0.
\end{gather}
\end{enumerate}
\end{theorem}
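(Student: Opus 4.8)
The plan is to prove the three parts of Theorem~\ref{main theorem} in sequence, exploiting the fact that the Lax operator $\mcL$ and its associated $L$ live inside the standard theory of shift operators already developed in Section~\ref{subsection:shift operators}. First I would establish Part~1, the commutativity of the flows. Since $\big[L^{d+1},\mcL\big]=0$ and $L$ is a function of $\mcL$ through the defining identity \eqref{eq:definition of L}, each flow $\frac{\d}{\d T_d}$ is a Lax-type flow of the familiar form. The standard argument applies: differentiating $L$ along $\frac{\d}{\d T_e}$ and using the Zakharov--Shabat compatibility condition, one shows that
\begin{gather*}
\frac{\d}{\d T_d}\big(L^{e+1}\big)_+-\frac{\d}{\d T_e}\big(L^{d+1}\big)_+=\left[\big(L^{e+1}\big)_+,\big(L^{d+1}\big)_+\right],
\end{gather*}
from which the Frobenius integrability $\big[\frac{\d}{\d T_d},\frac{\d}{\d T_e}\big]=0$ follows by a routine splitting of operators into positive and negative parts. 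The one point requiring care is that $\mcL$ contains the term $-\tau{\rm i}\eps\d_x$, so the algebra of operators genuinely mixes shifts and differentiation; I would verify that the positive/negative decomposition behaves correctly in this mixed setting, but this is formal and should present no real difficulty.

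For Part~2, the strategy is to show that the right-hand side of \eqref{eq:Lax equations} equals $\d_x\frac{\delta\oh^\Lax_d}{\delta u}$. The natural route is to compute the variational derivative of $\oh^\Lax_d$ directly. The key identity to establish is that, for an operator of this type, the variation of the residue of a power of $L$ is controlled by a trace-like formula; concretely, I would use the standard fact that $\frac{\delta}{\delta u}\int\res L^{k}\,{\rm d}x$ can be expressed through $\res$ of a commutator or through the symbol of an appropriate power of $L$. The combination $\frac{\res L^{d+2}}{(d+2)!}-\frac{\tau}{d+1}\frac{\res L^{d+1}}{(d+1)!}$ appearing in $\oh^\Lax_d$ is engineered so that, after taking the variational derivative and applying $\d_x$, the $\tau\log L$ correction built into the definition \eqref{eq:definition of L} of $L$ is exactly accounted for. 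I expect the appearance of the logarithm here to be the main obstacle: one must relate $\frac{\delta}{\delta u}\res L^{k}$ to the flow $\big[\big(L^{d+1}\big)_+,\mcL\big]$, and the presence of $\log L$ in the relation $L=\mcL+\tau\log L$ means the variational calculus involves the nontrivial coefficients $f_n$ of $\log L$, whose recursive definition via \eqref{eq:computation of log L} must be invoked carefully.

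For Part~3, the goal is the triangular change of Hamiltonians \eqref{eq:relation between Hamiltonians}. My approach would be to compare the two hierarchies at the level of their Hamiltonians using the uniqueness characterization of the ILW Hamiltonians $\oh^\ILW_d$ recalled in Section~2, namely that they are determined by their dispersionless limit $\int\frac{u^{d+2}}{(d+2)!}{\rm d}x$ together with commutativity with $\oh^\ILW_1$. I would first compute the classical (dispersionless) limit of $\oh^\Lax_d$, which amounts to evaluating $\res L^{k}$ when $\eps\to 0$; in that limit $L$ becomes an ordinary Laurent series in $\Lambda$ determined algebraically by \eqref{eq:definition of L}, and the residues can be extracted explicitly, producing the polynomial coefficients $P_{d+1,j+1}$. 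Matching this limit against the prescribed substitution $\mu=-\tau^{-1}$, $\eps\mapsto\eps\sqrt{-\tau}$ pins down the triangular structure. Then, since Part~1 guarantees the $\oh^\Lax_d$ mutually Poisson-commute, the substituted ILW Hamiltonians on the right-hand side inherit the same commutativity, and by the uniqueness property the two sides must agree to all orders in $\eps$. The main subtlety here will be bookkeeping the parameter substitution correctly—tracking how $\mu$, $\tau$ and the rescaling of $\eps$ interact—and confirming that the triangularity of the transformation makes the identification consistent order by order in $d$.
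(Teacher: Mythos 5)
Your Part 2 contains the genuine gap. Your plan is to compute $\frac{\delta}{\delta u}\oh_d^\Lax$ directly at the full dispersive level and match it against $\frac{1}{(d+1)!}\res L^{d+1}$, and you correctly identify the obstacle: $L$ depends on $u$ only implicitly through \eqref{eq:definition of L}, so any variational calculus drags in the coefficients $f_n$ of $\log L$. But you never say how to overcome this --- ``must be invoked carefully'' is precisely where the proof would have to happen, and the standard trace-type formulas for $\frac{\delta}{\delta u}\int\res L^k\,{\rm d}x$ from Toda/KP-type Lax calculus do not apply here, because $u$ enters through the combination $L-\tau\log L=\mcL$ rather than as a coefficient of $L$ itself. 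The paper avoids this computation entirely. Its route is: (i) prove the elementary identity $\int\res\big[f\Lambda^m,g\Lambda^n\big]{\rm d}x=0$, which shows each $\oh_d^\Lax$ is conserved along the single flow $\frac{\d}{\d T_1}$, so the flows generated by the $\oh_d^\Lax$ commute with $\frac{\d}{\d T_1}$; (ii) invoke the uniqueness lemma of \cite[Lemma~3.3]{LZ06} (or \cite[Lemma~4.14]{BR18}), by which such flows are determined by their dispersionless parts; (iii) verify the Hamiltonian property only at $\eps=0$, where the symbol $\hL_0$ satisfies the algebraic relation $\hL_0-\tau\log\hL_0={\rm e}^z+u-\tau z$, giving $\frac{\d\hL_0}{\d u}=\big(1-\tau\hL_0^{-1}\big)^{-1}$ and the recursion \eqref{eq:recursion}, from which \eqref{eq:Hamiltonians at the dispersionless level} follows. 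Ironically, this reduction-to-dispersionless trick is exactly the one you deploy (correctly) in Part 3; you need it in Part 2 as well, and without it your proposed direct computation has no visible way to close.

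Two smaller points. In Part 1 you assert that since $L$ is a function of $\mcL$, ``each flow $\frac{\d}{\d T_d}$ is a Lax-type flow of the familiar form''; this is precisely what must be proved, not a given. Differentiating \eqref{eq:definition of L} along $\frac{\d}{\d T_d}$ yields only one equation, $\frac{\d L}{\d T_d}-\tau\frac{\d\log L}{\d T_d}=\big[H_d,\mcL\big]$, for the two unknown operators $\frac{\d L}{\d T_d}$ and $\frac{\d\log L}{\d T_d}$; the paper closes the system by also differentiating the relations \eqref{eq:computation of log L}, shows the resulting system determines both operators uniquely, and then checks that the pair $\big(\big[H_d,L\big],\big[H_d,\log L\big]\big)$ solves it --- this is how \eqref{eq:formulas for derivatives of L and log L} is obtained, after which your Zakharov--Shabat computation coincides with the paper's. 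Finally, your Part 3 is essentially the paper's argument (reduce to the dispersionless level by uniqueness, then match polynomials), with the minor caveats that the explicit residue computation $\frac{\res\hL_0^{d+1}}{(d+1)!}=\Big(\prod_{j=1}^d\big(\d_u^{-1}+\frac{\tau}{j}\big)\Big)u$ still has to be done, and that your appeal to the mutual Poisson-commutativity of the $\oh_d^\Lax$ presupposes the unproven Part 2.
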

\begin{proof}1. Hereafter, for simplicity, we will use $L^m_+$ to denote $(L^m)_+$. Let
\begin{gather*}
H_d:=\frac{1}{\tau {\rm i}\eps(d+1)!}L^{d+1}_+,\qquad d\ge 0.
\end{gather*}
Let us first check that
\begin{gather}\label{eq:formulas for derivatives of L and log L}
\frac{\d L}{\d T_d}=[H_d,L],\qquad\frac{\d\log L}{\d T_d}=[H_d,\log L].
\end{gather}
Equations~\eqref{eq:computation of log L} and~\eqref{eq:definition of L} imply that
\begin{gather}
\frac{\d L}{\d T_d}-\tau\frac{\d\log L}{\d T_d}=[H_d,L]-\tau[H_d,\log L],\label{eq:equation1 for derivatives}\\
\res\left[\frac{\d\log L}{\d T_d},L^m\right]+\sum_{a+b=m-1}\res\left[\log L,L^a\circ\frac{\d L}{\d T_d}\circ L^b\right]=0,\qquad m\ge 1.\label{eq:equation2 for derivatives}
\end{gather}
We consider these equations as a system of equations for the pair of operators $\frac{\d L}{\d T_d},\frac{\d\log L}{\d T_d}$. Similarly to the discussion of the computation of the logarithm $\log L$ in the previous section, equations~\eqref{eq:equation1 for derivatives} and~\eqref{eq:equation2 for derivatives} allow to compute recursively all the coefficients of the operators $\frac{\d L}{\d T_d}$ and~$\frac{\d\log L}{\d T_d}$. Then it remains to note that the operators $\frac{\d L}{\d T_d}=[H_d,L]$ and $\frac{\d\log L}{\d T_d}=[H_d,\log L]$ satisfy system~\eqref{eq:equation1 for derivatives}--\eqref{eq:equation2 for derivatives}. This completes the proof of equations~\eqref{eq:formulas for derivatives of L and log L}.

When we know formulas~\eqref{eq:formulas for derivatives of L and log L}, the commutativity of the flows $\frac{\d}{\d T_d}$ is proved by a standard computation:
\begin{gather*}
 \tau {\rm i}\eps(d_1+1)!\tau {\rm i}\eps(d_2+1)!\left(\frac{\d}{\d T_1}\frac{\d u}{\d T_2}-\frac{\d}{\d T_2}\frac{\d u}{\d T_1}\right) \\
\qquad{} = \big[\big[L^{d_1+1}_+,L^{d_2+1}\big]_+,\mcL\big]+\big[L^{d_2+1}_+,\big[L^{d_1+1}_+,\mcL\big]\big]\\
\qquad\quad{} -\big[\big[L^{d_2+1}_+,L^{d_1+1}\big]_+,\mcL\big]-\big[L^{d_1+1}_+,\big[L^{d_2+1}_+,\mcL\big]\big]\\
\qquad {} \stackrel{\substack{\text{Jacobi}\\\text{identity}}}{=} \big[\big[L^{d_1+1}_+,L^{d_2+1}\big]_+,\mcL\big]-\big[\big[L^{d_2+1}_+,L^{d_1+1}\big]_+,\mcL\big]+\big[\big[L^{d_2+1}_+,L^{d_1+1}_+\big],\mcL\big]\\
\qquad{} = \underbrace{-\big[\big[L^{d_1+1}_-,L^{d_2+1}_+\big]_+,\mcL\big]-\big[\big[L^{d_2+1}_+,L^{d_1+1}\big]_+, \mcL\big]}_{=-\big[\big[L^{d_2+1}_+,L^{d_1+1}_+\big]_+,\mcL\big]}+\big[\big[L^{d_2+1}_+,L^{d_1+1}_+\big],\mcL\big]=0.
\end{gather*}
2. Note that the flows $\frac{\d}{\d T_d}$ can be written as
\begin{gather*}
\frac{\d u}{\d T_d}=\frac{1}{(d+1)!}\d_x\res L^{d+1}.
\end{gather*}
Let us compute the flow $\frac{\d}{\d T_1}$. For the coefficients of the operator $L$, one can immediately see that $a_0=u$ and then, using formula~\eqref{eq:formula for f_1}, we get
\begin{gather*}
a_1=\tau\frac{{\rm i}\eps\d_x}{\Lambda-1}u.
\end{gather*}
This allows to compute
\begin{gather*}
\frac{\d u}{\d T_1}= \frac{1}{2}\d_x\res L^2=\d_x\left(\frac{u^2}{2}+\frac{\tau {\rm i}\eps\d_x}{2}\frac{\Lambda+1}{\Lambda-1}u\right)=uu_x+\tau u_x-\tau\sum_{g\ge 1}\frac{|B_{2g}|}{(2g)!}\eps^{2g}u_{2g+1}\\
\hphantom{\frac{\d u}{\d T_1}}{} = \d_x\frac{\delta}{\delta u}\left(\frac{u^3}{6}+\tau\frac{u^2}{2}-\tau\sum_{g\ge 1}\frac{|B_{2g}|}{2(2g)!}\eps^{2g}u u_{2g}\right).
\end{gather*}

The local functionals $\oh_d^\Lax$ are conserved quantities for the flow $\frac{\d}{\d T_1}$. Indeed,
\begin{gather*}
\frac{\d}{\d T_1}\int\res L^d {\rm d}x=\frac{1}{2\tau {\rm i}\eps}\int\res\big[L^2_+,L^d\big]{\rm d}x,
\end{gather*}
which is zero because
\begin{gather*}
\int\res\big[f\Lambda^m,g\Lambda^n\big]{\rm d}x=\delta_{m+n,0}\int\big(f\cdot\Lambda^m g-g \cdot\Lambda^n f \big){\rm d}x=0,\qquad f,g\in\hcA_u,\qquad m,n\in\mbZ.
\end{gather*}
Therefore, the local functionals $\oh_d^\Lax$ together with the Poisson bracket $\{\cdot,\cdot\}_{\d_x}$ generate the flows which commute with the flow $\frac{\d}{\d T_1}$. Then these flows are uniquely determined by their dispersionless parts (see~\cite[Lemma 3.3]{LZ06} or~\cite[Lemma~4.14]{BR18}). Hence, it is sufficient to check the equation
\begin{gather*}
\d_x\frac{\delta\oh_d^\Lax}{\delta u}=\d_x\frac{\res L^{d+1}}{(d+1)!}
\end{gather*}
at the dispersionless level.

Denote $\hL_0:= \hL\big|_{\eps=0}$. We see that it is sufficient to check that
\begin{gather}\label{eq:Hamiltonians at the dispersionless level}
\d_x\frac{\d}{\d u}\left(\frac{\res\hL_0^{d+2}}{(d+2)!}-\frac{\tau}{d+1}\frac{\res\hL_0^{d+1}}{(d+1)!}\right)=\d_x\res\left(\frac{\hL_0^{d+1}}{(d+1)!}\right),\qquad d\ge 0.
\end{gather}
For this we compute
\begin{gather*}
\hL_0-\tau\log\hL_0={\rm e}^z+u-\tau z \ \Rightarrow \ \frac{\d\hL_0}{\d u}-\tau\frac{\frac{\d\hL_0}{\d u}}{\hL_0}=1 \ \Rightarrow \ \frac{\d\hL_0}{\d u}=\frac{1}{1-\tau\hL_0^{-1}}.
\end{gather*}
Therefore,
\begin{gather*}
\frac{\d}{\d u}\left(\frac{\res\hL_0^{d+1}}{(d+1)!}\right)=\frac{1}{d!}\res\left(\hL_0^d\frac{\d\hL_0}{\d u}\right)=\frac{1}{d!}\sum_{j=0}^d\tau^j\res\big(\hL_0^{d-j}\big),\qquad d\ge 0,
\end{gather*}
which gives
\begin{gather}\label{eq:recursion}
\frac{\d}{\d u}\left(\frac{\res\hL_0^{d+1}}{(d+1)!}\right)=\frac{\res\hL_0^d}{d!}+\frac{\tau}{d}\frac{\d}{\d u}\left(\frac{\res\hL_0^d}{d!}\right),\qquad d\ge 1.
\end{gather}
This implies equation~\eqref{eq:Hamiltonians at the dispersionless level}.

3. We see that
\begin{gather*}
\frac{\d u}{\d T_1}=\left.\frac{\d u}{\d t_1}\right|_{\substack{\mu=-\tau^{-1}\\\eps\mapsto\eps\sqrt{-\tau}}}+\tau u_x.
\end{gather*}
Using again the result of~\cite[Lemma~3.3]{LZ06} (see also~\cite[Lemma~4.14]{BR18}), we conclude that it is sufficient to prove equation~\eqref{eq:relation between Hamiltonians} at the dispersionless level, namely,
\begin{gather*}
\res\left(\frac{\hL_0^{d+2}}{(d+2)!}-\frac{\tau}{d+1}\frac{\hL_0^{d+1}}{(d+1)!}\right)=\sum_{j=0}^d P_{d+1,j+1}\tau^{d-j}\frac{u^{j+2}}{(j+2)!},\qquad d\ge 0.
\end{gather*}
Using formula~\eqref{eq:recursion} and the property $\hL_0\big|_{u=0}={\rm e}^z$, the last equation can be equivalently written as
\begin{gather}\label{eq:main elementary identity}
\frac{\res\hL_0^{d+1}}{(d+1)!}=\sum_{j=0}^d P_{d+1,j+1}\tau^{d-j}\frac{u^{j+1}}{(j+1)!}.
\end{gather}
Recursion~\eqref{eq:recursion} implies that
\begin{gather*}
\frac{\res\hL_0^{d+1}}{(d+1)!}=\left(\prod_{j=1}^d\left(\d_u^{-1}+\frac{\tau}{j}\right)\right)u,
\end{gather*}
where we define the action of the operator $\d_u^{-1}$ in the polynomial ring $\mbQ[u,\tau]$ by $\d_u^{-1}u^j:=\frac{u^{j+1}}{j+1}$, $j\ge 0$. Since we obviously have
\begin{gather*}
\sum_{j=0}^d P_{d+1,j+1}\tau^{d-j}\frac{u^{j+1}}{(j+1)!}=\left(\prod_{j=1}^d\left(\d_u^{-1}+\frac{\tau}{j}\right)\right)u,
\end{gather*}
identity~\eqref{eq:main elementary identity} becomes clear. This completes the proof of the theorem.
\end{proof}

\subsection*{Acknowledgements}

We would like to thank Andrea Brini, Guido Carlet, Oleg Chalykh, Allan Fordy, Alexander Mikhailov and Vladimir Novikov for useful discussions.
The work of the first author (Theorem~\ref{main theorem}, parts~1 and~3) was supported by the grant no.~16-11-10260 of the Russian Science Foundation.

\pdfbookmark[1]{References}{ref}
\LastPageEnding

\end{document}